\def\be{\begin{equation}}
\def\ee{\end{equation}}
\def\ba{\begin{array}{c}}
\def\ea{\end{array}}
\newcommand{\bea}{\begin{eqnarray}}
\newcommand{\eea}{\end{eqnarray}}
\newcommand{\kt}{\rangle}
\newcommand{\br}{\langle}
\newtheorem{thm}{Theorem}
\newtheorem{lemma}[thm]{Lemma}
\newenvironment{proof}{\noindent
 {\bf Proof.}}{\hfill$\square$\vspace{3mm}\endtrivlist}
\begin{document}

\begin{center}

.

{\Large \bf

Hybrid form of quantum theory
with non-Hermitian Hamiltonians

 }

\vspace{10mm}

\vspace{0.2cm}

\vspace{2mm}

\textbf{Miloslav Znojil}

\vspace{0.2cm}

\vspace{0.2cm}

Department of Physics, Faculty of Science, University of Hradec
Kr\'{a}lov\'{e},

Rokitansk\'{e}ho 62, 50003 Hradec Kr\'{a}lov\'{e},
 Czech Republic

\vspace{0.2cm}

 and

\vspace{0.2cm}

The Czech Academy of Sciences, Nuclear Physics Institute,

 Hlavn\'{\i} 130,
250 68 \v{R}e\v{z}, Czech Republic

\vspace{0.2cm}

{e-mail: znojil@ujf.cas.cz}

\end{center}

\newpage

\section*{Abstract}

In Schr\"{o}dinger picture the unitarity of evolution is usually
guaranteed by the Hermiticity of the Hamiltonian operator
$\mathfrak{h}=\mathfrak{h}^\dagger$ in a conventional Hilbert space
${\cal H}_{textbook}$. After a Dyson-inspired
operator-transformation (OT) non-unitary preconditioning $\Omega:
\mathfrak{h} \to H$ the simplified Hamiltonian $H$ is, in its
manifestly unphysical Hilbert space ${\cal H}_{auxiliary}$,
non-Hermitian. Besides its natural OT-based physical interpretation
it can also be ``Hermitized'' (i.e., made compatible with the
unitarity) via a metric-amendment (MA) change of the Hilbert space,
${\cal H}_{auxiliary}\to {\cal H}_{physical}$. In our present letter
we propose another, third, hybrid form (HF) of the Hermitization of
$H$ in which the change involves, simultaneously, both the
Hamiltonian and the metric. Formally this means that the original
Dyson map is assumed factorizable, $\Omega=\Omega_M\Omega_H$. A key
practical advantage of the new HF approach lies in the
model-dependent adaptability of such a factorization. The
flexibility and possible optimality of the balance between the
MA-related (i.e., metric-amending) factor $\Omega_M$ and the
OT-related (i.e., Hamiltonian-changing) factor $\Omega_H$ are
explicitly illustrated via an elementary two-state quantum model.


\section*{Keywords}
.

non-Hermitian quantum mechanics of unitary systems;

hiddenly Hermitian quantum Hamiltonians;

factorized Dyson map;

Hermitization using a combined amendment of the inner product and Hamiltonian;

\newpage

\section{Introduction}



In the Dyson's paper \cite{Dyson} the author had to deal with a
specific realistic quantum Hamiltonian
 \be
 \mathfrak{h}=\mathfrak{h}^\dagger\,
 \label{hepo}
 \ee
which was defined as acting
and self-adjoint in a conventional Hilbert space ${\cal
H}_{textbook}$,
During calculations such a representation of the system appeared
``user-unfriendly'' and, for this reason, Schr\"{o}dinger equation
had to be simplified. For the purpose, the author made use of an
isospectral preconditioning of the Hamiltonian,
 \be
 \mathfrak{h}\ \to \ H = \Omega^{-1}\,\mathfrak{h}\,\Omega
 \,.
 \label{loss}
 \ee
A remarkable as well as not quite expected success has been achieved
via an insightful choice of an unusual, {\it manifestly
non-unitary\,} preconditioning operator
 \be
 \Omega_{} \neq
 \left ( \Omega_{}^{-1} \right)^\dagger\,.
 \label{Dmap}
 \ee
The original Schr\"{o}dinger equation containing the prohibitively
complicated initial Hamiltonian
$\mathfrak{h}=\mathfrak{h}_{(Dyson)}$ appeared converted into a more
easily tractable evolution and eigenvalue problem.

In {\it loc. cit.}, in spite of the related loss of
the Hermiticity of $H_{}\neq H_{}^\dagger$, the
invertible operator-transformation (OT) change (\ref{loss})
of the Hamiltonian
based on a judicious choice or construction of a suitable Dyson map (\ref{Dmap})
paid off. The first step towards the
innovative use of non-Hermitian Hamiltonians $H \neq H^\dagger$
with real spectra in the quantum mechanics of unitary systems
has been made (see Refs.~\cite{book,MZbook} and also
a few more detailed explanatory comments in section \ref{sectiona} below).

Unfortunately,
a wider acceptance of the OT idea appeared slowed down by the
numerous emerging mathematical obstacles.
In a way emphasized, almost simultaneously,
by mathematicians \cite{Dieudonne},
the manifest non-Hermiticity of $H$ may
cause a serious conceptual concern because it often
implies, among other,
the emergence of instabilities
of the spectrum caused even by
very small
perturbations of the Hamiltonian \cite{Trefethen}.

One of the remedies has been proposed, in 1992, by Scholtz et al \cite{Geyer}.
In essence, these authors
emphasized another, alternative
(viz., ``metric-amendment'', MA)
aspect of the Dyson's concept of the Hermitization
(see also a concise outline of the MA theory
in section \ref{sectionb} below).
From the new perspective the above-mentioned
mathematical obstacles have been shown ``removable''
via additional assumptions like, typically, via the
purely technically motivated requirement that the admissible
non-Hermitian Hamiltonian operators $H$ have to be required bounded.

Naturally, the latter requirement (which proved not too essential, say,
for the model-building in
nuclear physics \cite{Jenssen})
appeared hardly acceptable in a broader quantum-theoretical context.
Fortunately, a truly sophisticated way out of the dead end
has been found by Bender with Boettcher \cite{BB}.
In a way not unknown in the older mathematical literature \cite{BG}
these authors shifted emphasis, in effect, from the mathematical,
OT-mediated correspondence (\ref{loss}) between
$H$ and its isospectral partner
$\mathfrak{h}$
to the underlying, MA-mediated physics.
In particular, there authors
turned attention
to the concept of parity-times-time-reversal symmetry of $H$
(${\cal PT}-$symmetry of $H$).
As a consequence their study
inspired the current
enormous
popularity of the widespread
use of the manifestly non-Hermitian Hamiltonians $H$ in
the various areas of physics
(see, e.g., the
early paper \cite{alien}, the
standard reviews \cite{Carl,ali} or the more
recent monographs \cite{book,Christodoulides}).

In our present letter we intend to complement these developments
by an innovative proposal of a certain third, ``hybrid-form'' (HF) combination
of the two standard and apparently distinct OT and MA theory-building strategies.
In a way explained in section \ref{sectionc} and illustrated by
a schematic application in section \ref{sectiond}
we will show that
for a unitarily evolving quantum system characterized by
a preselected phenomenological
Hamiltonian $H$ (with real spectrum)
an optimal representation of its unitary evolution
might be based on a mere partial
(i.e., not too complicated)
modification of the Hamiltonian operator
accompanied by a mere partial
(i.e., not too complicated)
amendment of the related correct physical Hilbert-space metric.

\section{The Dyson-inspired Hermitization $H \ \to \  \mathfrak{h}$\label{sectiona}}

In the early nineties Scholtz et al \cite{Geyer} recalled
the
success of the isospectral
simplification by mapping (\ref{loss}) in nuclear physics \cite{Jenssen}.
In its light they proposed an
extension of the theory and a
replacement of the de-Hermitizing mapping $\Omega: \mathfrak{h} \to H$ by
its Hermitizing inversion,
 \be
 H \ \to \  \mathfrak{h} = \Omega\,H\,\Omega^{-1}=
 \mathfrak{h}^\dagger\,.
 \label{gain}
 \ee
This means that one could consider certain less usual quantum
models which are
controlled by a manifestly non-Hermitian candidate $H$ for the Hamiltonian.
The transition (\ref{gain}) to its
isospectral self-adjoint partner $\mathfrak{h}$ can be then interpreted
as an OT-mediated Hermitization and a guide to the standard probabilistic interpretation
of the system in question.
The unitarity of the evolution of the underlying
quantum system as required by the Stone theorem \cite{Stone}
in Schr\"{o}dinger picture would then be restored.

In this spirit an innovative model-building process can
start from any sufficiently user-friendly
non-Hermitian candidate $H$ for the Hamiltonian.
Still,
a necessary guarantee of
the unitarity of the evolution must be provided,
but this can be achieved by the mere
reference to the isospectrality (\ref{gain}) between $H$
and its
self-adjoint partner
$\mathfrak{h}$ \cite{Geyer}.
What is obtained is
an apparently non-Hermitian
or, in the current mathematical terminology, quasi-Hermitian \cite{Dieudonne}
formulation of quantum mechanics (QHQM).
According to its reviews in the newer literature
the QHQM theory can be also assigned alternative names of
${\cal PT}-$symmetric quantum mechanics
\cite{Carl},
three-Hilbert-space quantum mechanics
\cite{SIGMA} or
pseudo-Hermitian quantum mechanics
\cite{ali}.

In the latter, quantum-mechanics-representation context
let us emphasize
that in the three reviews \cite{Geyer,Carl} and \cite{ali}
(as well as in our present forthcoming text, for the sake of brevity)
the Dyson map itself
is always assumed stationary,
 \be
 \Omega\neq \Omega(t)\,.
 \label{staci}
 \ee
Here, such a stationarity assumption is important.
I.a.,
this will enable us to follow, in section \ref{sectionc} below,
the methodical guidance as
provided by
a truly exceptional stationary-model study \cite{BG} by
Buslaev and Grecchi.
Indeed, Buslaev with Grecchi
were
probably the first authors who opened the question of
having the Dyson map factorized.
In this sense, their paper can be read as an immediate
predecessor and support
of our present fundamental factorizaton ansatz (cf. Eq.~(\ref{CDmap})
in section \ref{sectionc} below).
In addition, their paper can also be read as
complementing our present illustrative HF model of section \ref{sectiond}
by a
non-numerical
(i.e., rather rare)
and constructive exemplification of the preference of the extreme,
purely
OT, Dyson-map-based Hermitization (\ref{gain})
of a sufficiently realistic and, at the same time, 
mathematically friendly and analytic
ordinary-differential Hamiltonian~$H$.

%

\section{The Hilbert-space metric-operator-based
Hermitization  ${\cal H}_{auxiliary}\ \to \ {\cal H}_{physical}$\label{sectionb}}

The distinguishing feature of the
most recent implementations of the
QHQM formalism
is that the necessary
Hermitization of the observables,
i.e., the guarantee of
the unitarity of the evolution
is not provided by a reconstruction of the
textbook Hamiltonian $\mathfrak{h}$
defined as acting in a textbook Hilbert space ${\cal H}_{textbook}$
but rather by the replacement of the
Hermiticity requirement (\ref{hepo})
imposed upon $\mathfrak{h}$ in ${\cal H}_{textbook}$
by its equivalent form imposed directly upon $H$.
It is only necessary to imagine that the latter operator
is assumed introduced and defined as manifestly non-Hermitian in a manifestly unphysical
Hilbert space (say, ${\cal H}_{auxiliary}$).
In the light of its definition in terms of self-adjoint
$\mathfrak{h}$ [cf. Eq. (\ref{loss})]
it is easy to see that the conventional Hermiticity postulate (\ref{hepo})
imposed upon $\mathfrak{h}$
is formally equivalent to relation
 \be
 H^\dagger\,\Theta=\Theta\,H\,,\ \ \ \ \ \Theta=\Omega^\dagger\,\Omega\,,
 \label{quaqua}
 \ee
i.e., to the quasi-Hermiticity requirement
imposed upon $H$ in ${\cal H}_{auxiliary}$.
Now, what remains to do is to recognize that
the quasi-Hermiticity (\ref{quaqua})
of $H$ in ${\cal H}_{auxiliary}$
is equivalent to the Hermiticity of the same operator
in another, non-equivalent Hilbert space (say, ${\cal H}_{physical}$)
which only differs from ${\cal H}_{auxiliary}$ by an
appropriate {\it ad hoc\,} redefinition
of the inner product,
 \be
 ( \psi_a, \psi_b)_{auxiliary} = \br \psi_a| \psi_b\kt \ \to \
 ( \psi_a, \psi_b)_{physical} = \br \psi_a|\,\Theta\,| \psi_b\kt \,.
 \label{[6]}
 \ee
In other words, all of the mathematical
calculations may stay represented in ${\cal H}_{auxiliary}$.
For the purposes of physics (i.e., typically, for the evaluation
of predictions) the auxiliary
bra-vectors $\br \psi| \in {\cal H}_{auxiliary}'$
(where the prime $'$ marks the dual vector space \cite{Messiah})
have only to be replaced by the amended,
different bra-vector elements
 \be
 \br \psi_{physical}| = \br \psi|\Theta \equiv
 \br \psi_\Theta|\ \in \,{\cal H}_{physical}'\,
 \ee
spanning the correct and physical dual vector space ${\cal
H}_{physical}'$.
The triplet of the relevant Hilbert spaces may be
then arranged in a diagram
 \be
 \label{krexhu}
  \ba
 \begin{array}{|c|}
 \hline
  {\cal H}_{auxiliary}
 \\
  \hline
 \ea
\
 \stackrel{{\rm QHQM\ metric\ }\Theta}{ \longrightarrow }
 \
 \begin{array}{|c|}
 \hline
 {\cal H}_{physical} \\
  \hline
 \ea
\\ \ \ \
 \ \ \
\ \ \
  \stackrel{{\rm Dyson's\  map\ }\Omega}{ }
  \   \searrow
 \ \ \
\ \ \ \ \ \ \ \ \ \ \ \  \nearrow \!\!\swarrow  \
  \stackrel{{\rm equivalent\ physics\,. }\ }{ }
 \  \\ \ \  \
 \begin{array}{|c|}
  \hline
  \ \ \ \
 {\cal H}_{textbook} \ \ \ \ \\
  \hline
 \ea
 \\
\ea
 \ee
This diagram shows that the operator-transformation Hermitization (OT
Hermitization) of Eq.~(\ref{gain}) which changes the Hamiltonian finds
its alternative
in the  Hamiltonian-preserving and metric-amending Hermitization
(MA Hermitization) of Eq.~(\ref{[6]}).
In both of these scenarios,
all of
the ket- and bra-vector elements
of
the relevant Hilbert spaces
can be
represented in one, ``mathematical'' Hilbert space
${\cal H}_{auxiliary}$ in a way summarized
by the three-Hilbert-space diagrams (\ref{krexhu}) and
 \be
 \label{rexhum}
  \ba
 \begin{array}{|c|}
 \hline
 {\rm  } |\psi_m\kt,
  \ {\rm  }
 \br \psi_n | \\ 
  \hline
 \ea
\
 \stackrel{{\rm MA \ path}}{ \longrightarrow }
 \
 \begin{array}{|c|}
 \hline
 {\rm  } |\psi_m\kt,
  \ {\rm  }
 \br \psi_n |\, \Omega^\dagger\Omega \\  
  \hline
 \ea
\\  \ \ \ \ \ \
  \stackrel{{\rm OT \ path}}{ }
  \   \searrow
 \ \ \ \ \  \ \ \ \ \
  \nearrow \! \swarrow  \
  \stackrel{{\rm equivalence\,}\ }{ }\ \ \ \ \
 \  \\
 \begin{array}{|c|}
 %
  \hline
 \Omega\, |\psi_m\kt,
  \
 \br \psi_n |\, \Omega^\dagger \\  
  \hline
 \ea\ \
 \\
\ea
 \ee
(cf. also review \cite{SIGMA}).

\section{A hybrid, partial-Dyson-map plus partial-metric-operator
 Hermitization\label{sectionc}}


In our present paper we intend to complement the latter two forms of
alternative Hermitizations by a new, third, hybrid one.
A basic motivation of our proposal
resulted from the following two observations.

\begin{itemize}
\item
In the case of the Dyson-inspired
Hamiltonian-transformation Hermitization  $
H \to \mathfrak{h}=\mathfrak{h}^\dagger
 $
there exist several important weak points of the strategy.
The main one is that whenever one is able to construct
$\mathfrak{h}$ in closed form, the major part of the motivation
of working with its
non-Hermitian partner
is lost. In such a case it makes good sense to
forget about $H$ and to relocate
all of the necessary calculations directly
to ${\cal H}_{(textbook)}$.
In practice, only a few exceptions do exist:
Buslaev and Grecchi \cite{BG}, for example, found a rather exceptional
anharmonic-oscillator quantum system in which the OT strategy and
the active use of
both of the formally equivalent representation
operators $H_{({BG})}$ and $\mathfrak{h}_{({BG})}$ have been found feasible and
proved equally easy. In such a case, naturally,
the ultimate choice between the two operators
may be dictated by some complementary phenomenological criteria \cite{Mateo,Rebecca}.

\item
In the case of the strategy based on the reconstruction of the necessary
amended physical Hilbert-space metric $\Theta$
one can encounter multiple non-trivial and not always expected technical obstacles.
Their source is that whenever one
decides to treat Eq.~(\ref{quaqua}) as an implicit definition of $\Theta=\Theta(H)$,
one reveals that such a definition is not only ambiguous \cite{lotor}
but also, mostly, yielding just an approximate form of the metric:
A fairly long list of these technical MA-related obstacles
may be found presented, e.g., in review paper \cite{ali}.

\end{itemize}

 \noindent
The main idea of our present methodical innovation can be traced back to the Dyson's
physics-rooted
papers \cite{Dyson} in which the essence of the success (i.e., the
sufficiently persuasive reason for the preference of
the use of the non-Hermitian Hamiltonian model $H_{(Dyson)}$)
lied in the access to some additional information about the properties
(i.e., typically, correlations \cite{Jenssen} or clusterings \cite{Bishop}) of the
system in question.
Naturally, such a supplementary knowledge facilitated the specification of the
``good'' forms of the mapping $\Omega$.

Without the latter advantage
the probability of success
(i.e., of the simplifications achieved as a compensation of the loss of
the Hermiticity)
would be lower.
Still, we believe that the ``user-friendliness'' of the mapping $\Omega$ and/or
of the metric $\Theta(H)$ (which is, for the success, crucial)
could be also enhanced,
in the generic situation, by
a tentative factorization of the operator $\Omega$ itself.
In particular, we propose the use of the two-term ansatz
 \be
 \Omega_{} =
  \Omega_{M}\, \Omega_{H}\,.
 \label{CDmap}
 \ee
Under this assumption, the generic Dyson map
can be reinterpreted as a sequence of the
two sub-maps.
Subsequently, also the operator-transformation of the
Hamiltonian can be made sequential. In such a setting
the emerging ``interpolative'' step
 \be
 H \ \to \  H_{H} = \Omega_{H}\,H\,\Omega^{-1}_{H} \neq
 H_{H}^\dagger\,
 \label{pologain}
 \ee
might precede the ultimate OT Hermitization
 \be
 H_{H} \ \to \  \mathfrak{h}= \Omega_{M}\,H_{H}\,\Omega^{-1}_{M} =
 \Omega_{}\,H_{}\,\Omega^{-1}_{} =
 \mathfrak{h}^\dagger\,.
 \label{nepologain}
 \ee
This is, {\it in nuce}, the main message of our present paper.

One can almost immediately deduce that
the ``last step'' (\ref{nepologain}) is in fact not unavoidable.
Alternatively one may
finalize the Hermitization in the MA and
Hilbert-space-amendment spirit.
Technically, such a methodical alternative would simply mean that
one keeps the new non-Hermitian Hamiltonian $H_{H}$ unchanged.
Thus,
one recalls only the Hermiticity of $\mathfrak{h}$ and re-expresses this
property as the quasi-Hermiticity property of $H_{H}$.
In this manner,
the Hermiticity rule
$\mathfrak{h}=\mathfrak{h}^\dagger$
becomes replaced by the formally equivalent relation
 \be
 H_{H}^\dagger\,\Theta_{M}=\Theta_{M}\,H_{H}\,,\ \ \ \ \
 \Theta_{M}=\Omega_{M}^\dagger\,\Omega_{M}\,
 \label{poloqua}
 \ee
in which one just has to confirm that
the definition
 \be
 \Theta_{M}=\Omega_{M}^\dagger \Omega_{M}
 \label{[14]}
 \ee
of the new metric operator
is correct and consistent with all of the formal QHQM postulates.

Fortunately, the latter confirmation
requires just an elementary algebra
so that it may be left to the readers.
What one obtains is a hybrid-form Hermitization (HF Hermitization) of $H$.
What is, at the same time, most important is
the question about the potential usefulness of the new HF-type Hermitization
(\ref{pologain}) + (\ref{[14]})
in applications. In the next section, we will
try to persuade the readers about the 
existence of the specific merits of the HF strategy
via an
implementation of the recipe to a schematic illustrative two-level quantum system.

\section{Illustration\label{sectiond}}

For an explicit illustration of our abstract HF
representation of
the hiddenly unitary quantum evolution
let us pick up just
a maximally schematic two-by-two real-matrix model.
For the sake of simplicity let us also assume that the
Hermitian textbook Hamiltonian (with the
real and non-degenerate spectrum, say, $E_{0}=1$ and $E_{1}=2$) is,
for our methodical purposes,
diagonalized,
 \be
 \mathfrak{h}=\left[ \begin {array}{cc} 1&0\\
 \noalign{\medskip}0&2\end {array} \right]\,.
 \label{preart}
  \ee
Also, let us choose, in (\ref{CDmap}), the two Dyson-map factors (and
their product) as follows,
 \be
 \Omega_{M} =\left[ \begin {array}{cc} 1&0\\\noalign{\medskip}s&1\end {array}
 \right]\,,\ \ \ \ \
 \Omega_{H} = \left[ \begin {array}{cc} 1&t\\\noalign{\medskip}0&1\end {array}
 \right]\,,\ \ \ \ \
 \Omega_{} =\left[ \begin {array}{cc} 1&t\\\noalign{\medskip}s&st+1\end {array} \right]
 \,.
 \label{twocd}
 \ee

In the light of Eq.~(\ref{loss}) we may replace our
$\mathfrak{h}$ of Eq.~(\ref{preart}) by its manifestly non-Hermitian
isospectral alternative
 \be
 H=
 \left[ \begin {array}{cc} -st+1&- \left( st+1 \right) t
 \\
 \noalign{\medskip}s&st+2\end {array} \right]\,.
 \label{ohjoj}
 \ee
Obviously, in a sharp
contrast to trivial Eq.~(\ref{preart}),
the new Hamiltonian-sampling
matrix $H$ looks ``complicated''.
For our present methodical purposes it is important that
the latter observation can be also reinterpreted as a statement that
the contrast of complexities will survive
in any other basis, with an optimal representation relocated,
including the opposite extreme case in which the matrix form of 
$H$ would
look decisively simpler
than the matrix form of $\mathfrak{h}$.

For the sake of simplicity let us just keep in mind the latter
methodical argument while using the illustrative matrices,
due to their maximal pedagogical appeal, in their fixed and given form.
This means that we may still speak about a generic nature of
the sharp contrast between the
triviality of the secular equation for one of the models (here, for $\mathfrak{h}$)
and an apparently more complicated form of its
Dyson-mapping-mediated analogue.

In our particular example, therefore, one is able to estimate the complexity
of the matrix model
only after some explicit
algebraic manipulations.
These manipulations also enable us to
reveal 
and demonstrate that
the apparently complicated and $s-$ and
$t-$dependent secular
polynomial $\det(H-E)$
associated with $H$ is in fact
also elementary and  $s-$ and
$t-$independent and easily factorizable,
$\det(H-E)= {E}^{2}-3\,E+2=(E-1)\,(E-2)$.

Another apparent
contrast in complexity between $\mathfrak{h}$ and $H$
emerges during the verification
of the validity of the MA-related quasi-Hermiticity relation (\ref{quaqua}).
Indeed, after insertion we obtain the most complicated matrix result
 \be
 H^\dagger \Theta=\Theta\,H=\left[ \begin {array}{cc} 1+2\,{s}^{2}&t+2\,t{s}^{2}+2\,s
 \\\noalign{\medskip}t+2\,t{s}^{2}+2\,s&{t}^{2}+2\,{t}^{2}{s}^{2}+4\,ts
 +2
 \end {array} \right]\,.
 \ee
Indeed, it takes time to see that
this matrix is Hermitian and positive definite at any positive
values of the real parameters $s$ a $t$: Incidentally, in our recent paper \cite{EPJP}
the operator products $H^\dagger \Theta=\Theta\,H$
were denoted by a dedicated symbol $Y$ and studied {\it per se}.
Incidentally, such a study was well motivated since these products
prove highly relevant
in the context of the random matrix theories \cite{Joshua}.
Naturally, in
such an applied physics context their 
complexity or simplicity would be an important 
characteristics, indeed.

In any physical and QHQM context, after all,
one finally has to turn attention to the explicit form of the
MA-mediating
Hermitizing metric. In our model we have
 \be
  \Theta_{}=
 \left[ \begin {array}{cc} 1+{s}^{2}& \left( 1+{s}^{2} \right) t+s\\
 \noalign{\medskip}t+ \left( ts+1 \right) s& \left( t+ \left( ts+1
 \right) s \right) t+ts+1\end {array} \right]
 \label{[19]}
 \ee
Indeed, one might think that the necessary (and,, quite often, difficult
\cite{trialy}) proofs of its positive definiteness
can still be done, quite routinely, via the computer-assisted
symbolic manipulations.
In this sense we encountered a technical surprise
because in spite of a virtual triviality of the underlying two-by-two-matrix algebra
the explicit evaluation of the spectrum $\{\theta_{\pm}\}$ of the metric
as provided by the routines of
MAPLE \cite{MAPLE} appeared complicated,
 \be
 \theta_{\pm}=1/2\,{t}^{2}+1/2\,{t}^{2}{s}^{2}+ts+1+1/2\,{s}^{2}
 \pm 1/2\, \sqrt {D}
 \ee
containing, in addition, an unexpectedly long polynomial
 \be
 D={2\,{t}^
 {4}{s}^{2}+4\,{t}^{3}s+4\,{t}^{3}{s}^{3}+{t}^{4}{s}^{4}+{s}^{4}+{t}^{4
 }+4\,{s}^{2}+8\,ts+4\,{t}^{2}+10\,{t}^{2}{s}^{2}+2\,{t}^{2}{s}^{4}+4\,
 t{s}^{3}}\,.
 \label{expr}
 \ee
Still, one arrives at a nice result.

\begin{lemma}
\label{lemma1}
The matrix of metric (\ref{[19]}) is positive definite.
\end{lemma}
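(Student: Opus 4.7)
The plan is to avoid the brute-force eigenvalue computation altogether (which is exactly what produced the unwieldy expression for $D$) and instead exploit the structural fact that, by construction, $\Theta$ is of the form $\Omega^\dagger\Omega$ with $\Omega$ invertible. Since $\Omega_M$ and $\Omega_H$ as given in (\ref{twocd}) are, respectively, lower- and upper-triangular with unit diagonal, each has determinant one, so $\det\Omega = \det\Omega_M \det\Omega_H = 1$, and in particular $\Omega$ is invertible for every pair of real parameters $s,t$. This is the core observation; the rest is essentially bookkeeping.

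First, I would verify directly from (\ref{twocd}) that the product $\Omega^\dagger \Omega$ indeed equals the matrix displayed in (\ref{[19]}); this is a one-line check that confirms the identification $\Theta=\Omega^\dagger\Omega$ in this concrete model. Then, for any nonzero vector $v\in\mathbb{R}^2$, the bilinear form satisfies
\begin{equation}
 v^\dagger\,\Theta\,v \;=\; v^\dagger\,\Omega^\dagger\,\Omega\,v \;=\; \|\Omega v\|^2 \;\ge\; 0,
\end{equation}
with equality only when $\Omega v = 0$, which by invertibility of $\Omega$ forces $v=0$. Hence $\Theta$ is positive definite.

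Alternatively, if one prefers to stay at the level of the explicit $2\times 2$ matrix, Sylvester's criterion finishes the job in two strokes: the leading minor $\Theta_{11} = 1+s^2 > 0$ for every real $s$, while the full determinant is $\det\Theta = |\det\Omega|^2 = 1 > 0$. Both minors being strictly positive gives positive definiteness. I do not expect any genuine obstacle here: the only temptation worth resisting is the one that apparently led to (\ref{expr}), namely trying to certify positivity of $\theta_\pm$ symbolically from the radical. Using either of the two routes above bypasses that mess entirely and, as a bonus, makes clear that the conclusion is not special to the chosen $\Omega_M,\Omega_H$ but holds whenever the factorization ansatz (\ref{CDmap}) is implemented by invertible factors.
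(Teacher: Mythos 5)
Your proof is correct, and it takes a genuinely different route from the paper's. The paper proceeds by brute force: it computes the eigenvalues $\theta_\pm$ of the explicit matrix (\ref{[19]}), obtains the unwieldy discriminant (\ref{expr}), and rescues the argument by spotting the hand-checkable factorization $D=\left(4+t^2s^2+(s+t)^2\right)\left(t^2s^2+(s+t)^2\right)$, from which the positivity of the smaller eigenvalue becomes visible (indeed, with $A=t^2s^2+(s+t)^2$ one has ${\rm tr}\,\Theta=A+2$ and $({\rm tr}\,\Theta)^2-D=4>0$). You instead exploit the structural identity $\Theta=\Omega^\dagger\Omega$ with $\det\Omega=\det\Omega_M\det\Omega_H=1$, so that $v^\dagger\Theta v=\|\Omega v\|^2>0$ for $v\neq 0$; your Sylvester variant ($\Theta_{11}=1+s^2>0$, $\det\Theta=(\det\Omega)^2=1>0$) is equally airtight. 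What your argument buys is brevity and generality: it works for all real $s,t$ (not just positive ones), and it shows the conclusion is automatic for any metric built from an invertible factorized Dyson map (\ref{CDmap}), which is arguably the conceptually right reason. What the paper's computation buys, and what your shortcut deliberately bypasses, is the explicit spectrum of $\Theta$, which the author needs for the surrounding rhetorical point -- the contrast between the complicated MA-metric eigenvalues and the elementary eigenvalues $2+s^2\pm\sqrt{(2+s^2)^2-4}$ of the HF metric (\ref{Z21}). So as a proof of the Lemma your argument is cleaner and strictly stronger; it just does not by itself document the complexity gap that motivates the hybrid strategy.
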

\begin{proof}
The proof of the positivity of the metric
remained, fortunately, easy and straightforward because
after the check of the
calculation which was still feasible to
perform by hand the helpful key simplification
\be
D= { \left( 4+{t}^{2}{s}^{2}+({s}+{t})^{2} \right)  \left( {t}^{2}
 {s}^{2}+({s}+{t})^{2} \right)}
\ee
of the long expression (\ref{expr})
by factorization was immediately visible.
\end{proof}


We believe that the latter set of formulae persuaded the readers that
the technical obstacles encountered during the pure-form OT or MA
Hermitizations need not be inessential. We believe that the hybrid approach
offering, via Eq.~(\ref{pologain}), the next-to elementary Hamiltonian
 \be
 H_{H}=\left[ \begin {array}{cc} 1&0\\\noalign{\medskip}s&2\end {array} \right]\,
 \ee
in combination
with the next-to elementary inner-product metric of Eqs.~(\ref{poloqua})
and~(\ref{[14]}), viz.,
 \be
  \Theta_{M}=
 \left[ \begin {array}{cc} 1+{s}^{2}&s\\\noalign{\medskip}s&1\end {array} \right]
  \label{Z21}
 \ee
delivers a fairly persuasive argument in favor of the optimality
of our newly proposed third, HF Hermitization strategy.
{\it Pars pro toto\,} it is worth noting that in contrast to
the MA-related Lemma~\ref{lemma1}
the analogous HF-validating proof of the
positive definiteness of metric (\ref{Z21}) at all $s$
proceeds via the positivity of its much more elementary eigenvalues
$\ \sim 2+s^2 \pm \sqrt{(2+s^2)^2-4}$.

In our final argument supporting the innovative HF Hermitization strategy
we may say that the single-parameter dependence of the
HF-representing pair of operators $H_{H}$ and $\Theta_{M}$
symbolizes and underlines a half-way formal character of the HF approach.
We see that also by the criteria of complexity of formulae it lies
in between the two (viz., OT and MA) extremes, one of which is always,
in our schematic illustration, two-parametric.

\section{Conclusions}

We may summarize our present proposal of an innovative HF
reformulation of quantum mechanics as a composition 
of the partial OT mapping
$H \to H_H$
of Eq.~(\ref{pologain}) with the partial MA
quasi-Hermiticity constraint (\ref{poloqua}) which is
formulated in terms of the
reduced HF metric $\Theta_M$.
In conclusion let us now add a few complementary comments
on such a form of the model-building strategy.

First of all, let us underline the importance of the
stationarity assumption (\ref{staci}).
It represents a key to the
practical applicability and feasibility of the 
present HF Hermitization.
In the language of mathematics the stationarity is
really the constraint which
enabled us to make the full use of our fundamental
Dyson-map-factorization ansatz (\ref{CDmap}).
Now, we just have to add that in the nearest future we plan to
generalize and develop the HF theory also
towards its non-stationary extensions.
In fact, the first steps in this direction were already made,
in 2008, in
our
three-Hilbert-space description of the unitary quantum evolution
as proposed in paper \cite{timedep} and as reviewed, later, in
\cite{SIGMA}.
On these grounds we were
able to formulate
a non-Hermitian (or, better, hiddenly Hermitian)
generalization of the
standard unitary quantum mechanics in
interaction picture \cite{NIP}  (see also \cite{Frith})
and, in the special cases, not only
in 
Heisenberg picture \cite{Heisenberg} (see also 
the parallel developments in \cite{Heisenbergb})
but also
in the generalized Heisenberg picture (see \cite{ITJP}
and also \cite{Frith,Frithb}).

In all of the latter papers
the Dyson maps were assumed
manifestly non-stationary, $\Omega=\Omega(t)$.
Interested readers should certainly consult the most recent
forms of the non-stationary QHQM theory, say,
in the comprehensive reviews \cite{revju,compa}.
Naturally, due to the non-stationarity of these innovations
there exists no direct comparison
with our present, stationary version of the HF approach.
Nevertheless, it is
worth adding that the authors of review \cite{compa}
proposed to call the non-stationary Dyson maps ``generalized vielbeins''.
This is a truly remarkable detail because it
combines the apparently purely terminological decision with a truly
deep insight in the possible applicability
of the non-stationary Dyson maps in quantum cosmology.
Unfortunately, such an observation and
remark already go fairly beyond the scope of our present letter.
Still,
interested readers are recommended to have a look
at the corresponding newest considerations in \cite{WDW}
or in the very recent preprint \cite{compb}.

In our second, independent and last 
concluding comment on our present HF method
let us point out that 
it is, in some sense, surprising that in the newest literature on the
non-Hermitian quantum Hamiltonians (cf., e.g., \cite{Christodoulides,Carlbook})
the ``mainstream'' attention is slowly moving 
from the closed-system theory
to the 
studies of the open and nonlinear quantum systems.
In this case, naturally, the unitarity
ceases to be an issue. One of the possible explanations of the shift
could be
sought in the fact that in spite of the initially enthusiastic
acceptance of the Hermitization ideas, their subsequent
more detailed analysis and implementations encountered multiple
(and also not quite expected)
conceptual as well as technical complications \cite{MZbook,Jones,Siegl}.
Not enough attention has been paid to the quantum systems
(sampled, e.g., by the Buslaev's and Grecchi's paper \cite{BG})
in which a way out of the difficulties could be sought 
in a decomposition of the Dyson map into a sequence of
its simpler factors. In this sense, our present paper on Hermitizations
based on the composite Dyson map ansatz (\ref{CDmap})
can be read as an encouragement of a renewal of interest in unitarity and of a 
start of a new active research in the apparently old-fashioned closed-system direction.

%
%
%
%
%
%
%
%
%
%
%
%

\newpage

\end{document}